\documentclass{article}
\usepackage[a4paper]{geometry}
\usepackage[utf8]{inputenc}
\usepackage{amsfonts} 
\usepackage{amsmath}
\usepackage{systeme}
\usepackage{setspace}
\usepackage{cite}
\usepackage{booktabs}
\usepackage{setspace}
\usepackage{array}
\usepackage[utf8]{inputenc}
\usepackage{fancyhdr}
\usepackage{graphicx}
\usepackage{mathtools}
\usepackage{pgfplots}
\pgfplotsset{compat=1.15}
\usepackage{mathrsfs}

\usepackage{pgf,tikz}
\usetikzlibrary{arrows}
\usepackage{cases}
\usepackage{amssymb}
\usepackage{amsthm}
\usepackage{tabularx}
\usepackage{chessboard}
\usepackage{tikz-network}
\usepackage{tikz}
\usepackage{bbm}

\usepackage{mathtools}

\usetikzlibrary{shapes,backgrounds}
\usetikzlibrary{decorations.markings}
\usepackage{verbatim}
\makeatother
\usepackage[inline]{asymptote}
\usepackage{extarrows}
\usepackage{hyperref}
\hypersetup{
    colorlinks,
    citecolor=black,
    filecolor=black,
    linkcolor=black,
    urlcolor=black
}

\newcolumntype{C}[1]{>{\centering\arraybackslash}m{#1}}

\newtheorem{theorem}{Theorem}

\title{Cramming and Credibility: Strategic Test Announcements in the Classroom}
\author{Zijun Meng}
\date{}

\begin{document}
\linespread{1.5}
\setlength{\baselineskip}{18pt}
\maketitle

\begin{center}
    \textbf{Abstract}
\end{center}
This paper studies a cheap-talk model of strategic test announcements. A teacher observes the day of the test of the next week decided by the nature and makes an announcement to his students who choose effort levels of studying. The competing forces are the teacher's value on consistent study habits and the students' grade orientation. We characterize the pure strategy Nash equilibrium under the linear-quadratic student utility. We also study what happens when the teacher can commit to an information policy.

\tableofcontents

\newpage
\section{Introduction}
There is a teacher and a group of students. The teacher prepares weekly tests for the students, but the day of the test is random, it could be any day from Monday to Friday. The teacher would like the students to develop a good habit of studying consistently regardless of whether there is an upcoming test because he believes that this will be beneficial for the students in the long run, but students can be too lazy to do that, and since they still care about their grades, they have the incentive to study more when they guess that there will be an upcoming test tomorrow. There is a spectrum of students in this regard: some focus more on their long-term development, some others are more short-sighted and only care about the grades.

In the end of every week, the teacher will claim that the test for the next week will be on a certain day. He could tell the truth and he could lie. In order to let the students to study not only right before the tests, the teacher should not tell the truth every time. On the other hand, in order to let the students not totally ignore what he says (or even directly regard the days he announce as the days with no tests), the teacher should also not lie every time. From the perspective of the students, they should also be strategic on updating their beliefs on the days of the tests upon hearing the announcements of the teacher.

In this paper, we study how the teacher and the students interact in the aforementioned setting by setting up a mathematical model. We characterize the Nash equilibria of this strategic setting. We also try to understand how the preferences (i.e. the far-sightedness) of the students affect the outcomes (for example, the extent to which the students distrust the teacher, and the way the students study) in the equilibria. In addition, we study what happens when the teacher can commit to a strategy before the truth test days are decided. 

The idea is rooted in the cheap-talk model in \cite{c1982} and \cite{f1996}, which is later extended to a setting of multiple audiences in \cite{f1989}, \cite{g2008} and a dynamic setting in \cite{a2003}, \cite{g2014}. Also, \cite{k2011} provides a useful framework of Bayesian persuasion, on which our result on the commitment setting is based.

\section{Model}
Let $D=\{1,\,2,\,3,\,4,\,5\}$ be the set of days in a week, denoting Monday, Tuesday, Wednesday, Thursday and Friday respectively. For a typical week, the teacher stochastically decides the day $\theta\in D$ of the test with uniform probability on each day, i.e. $\pi_i:=\mathbb P(\theta=i)=\frac15$ for any $i\in D$. The teacher then tells the students that the test will be on day $m\in D$ according to his/her strategy $\sigma(m|\theta)$, where \[\sum_{m\in D}\sigma(m|\theta)=1.\]

Students differ in terms of far-sightedness, which is captured by his/her type $\alpha\in [0,\,1]$. Suppose that the types of the class of students follow a distribution $F$. A student with type $\alpha$ enjoys a utility \[u_\alpha(x_\alpha,\theta)=\alpha\sum_{d\in D}H(x_{\alpha d})+(1-\alpha)G(x_{\alpha\theta})-\sum_{d\in D}C(x_{\alpha d}),\] where $x_\alpha=(x_{\alpha 1},\,\dots,\,x_{\alpha 5})$ is the vector denoting the effort the student spend on studying for each day, $H$ is the part of utility gained by maintaining a good habit of student consistently, $G$ is the part of utility gained by improving the grade, and $C$ is the cost of studying.

The students have partial distrust on the message $m$ sent by the teacher, but for the moment let us suppose that the students have a consensus and they share a posterior belief that the probability that the true test day is on day $d\in D$ upon receiving message $m$ is \[\mu(d| m):=\mu_{(\pi,\sigma)}(d| m):=\frac{\pi_d\sigma(m|d)}{\sum\limits_{\theta\in D}\pi_\theta\sigma(m|\theta)}=\frac{\sigma(m|d)}{\sum\limits_{\theta\in D}\sigma(m|\theta)},\] and \[\sum_{d\in D}\mu(d| m)=1.\]

A student of type $\alpha$ now tries to solve for \[\max_{x_{\alpha d}\ge 0}\sum_{d\in D}(\alpha H(x_{\alpha d})+(1-\alpha)\mu(d| m)G(x_{\alpha d})-C(x_{\alpha d})).\] Assuming sufficient smoothness of the functions we can write the first-order condition as \[C'(x_{\alpha d})=\alpha H'(x_{\alpha d})+(1-\alpha)\mu(d| m)G'(x_{\alpha d}).\]

The teacher hopes that students spend an effort level $x^T_d(\theta)=a+\tau\delta_{d\theta}$, where $\tau$ is a parameter capturing the degree to which the teacher hopes students to focus on the test (i.e. if the teacher hopes students to completely study for its own sake and ignore the test, then $\tau=0$). The teacher enjoys a negative $\mathcal L^2$ utility \[U_T(\theta,\,m)=-\int_{[0,\,1]}\sum_{d\in D}(x^*_{\alpha d}(m)-x^T_d(\theta))^2dF,\] where $x^*_{\alpha d}(m)$ is the effort level on day $d$ for the utility of a student with type $\alpha$ upon hearing message $m$.

Now, the model is not very tractable as H, G and C are all generic. Let's suppose that $H(x_{\alpha d})=hx_{\alpha d}$, $G(x_{\alpha\theta})=gx_{\alpha\theta}$, $C(x_{\alpha d})=\frac c2x_{\alpha d}^2$. The first order condition \[C'(x_{\alpha d})=\alpha H'(x_{\alpha d})+(1-\alpha)\mu(d\mid m)G'(x_{\alpha d})\] provides the optimal effort level \[x^*_{\alpha d}=\frac{\alpha h+(1-\alpha)g\mu(d\mid m)}c.\]

Then, the teacher's utility $\tilde U_T(\theta,\,m)$ becomes \begin{align*}&-\int_{[0,\,1]}\sum_{d\in D}(x^*_{\alpha d}(m)-x^T_d(\theta))^2dF\\=&-\int_{[0,\,1]}\sum_{d\in D}\left(\frac{\alpha h+(1-\alpha)g\mu(d| m)}c-x_d^T(\theta)\right)^2dF\\=&-\int_{[0,\,1]}\left(\sum_{d\in D}\frac{(1-\alpha)^2g^2\mu(d| m)^2}{c^2}-\frac{2\tau (1-\alpha)g\mu(\theta| m)}c+\text{constant}\right)dF\\=&-\frac{g^2}{c^2}\left(\int_{[0,\,1]}(1-\alpha)^2dF\right)\sum_{d\in D}\mu(d| m)^2+\frac{2\tau g}c\left(\int_{[0,\,1]}(1-\alpha)dF\right)\mu(\theta| m)+\text{constant}\\=&\frac{g^2}{c^2}\left(\int_{[0,\,1]}(1-\alpha)^2dF\right)\left(-\sum_{d\in D}\mu(d| m)^2+2\kappa\mu(\theta| m)\right)+\text{constant},\end{align*} \[\text{where }\kappa:=\frac{\frac{\tau g}c\int_{[0,\,1]}(1-\alpha)dF}{\frac{g^2}{c^2}\int_{[0,\,1]}(1-\alpha)^2dF}.\]

\section{Pure Equilibria}
Up to a increasing (in fact, affine) transformation, it suffices to consider \[U_T(\theta,\,m):=-\sum_{d\in D}\mu(d| m)^2+2\kappa\mu(\theta| m)\] (instead of $\tilde U_T$) as the teacher's utility.
A teacher's pure strategy equilibria can be written as a function $s:D\to D$, where the teacher tells the students that the test will be on day $s(\theta)$ when the true test day is $\theta$.
$s$ induces fibers $C_m=\{\theta\in D:s(\theta)=m\}$, and hence a partition $\{C_1,\,\dots,\,C_K\}$ of $D$, where $C_m$ is the set of true test days for which the teacher will send a message $m$.
Students will have a Bayesian update $\mu(d| m)=\frac1{|C_m|}\mathbbm 1_{d\in C_m}$ upon hearing message $m$.

\begin{theorem}
The necessary and sufficient condition for \[\left(s,\,x^*_{\alpha d}=\frac{\alpha h+(1-\alpha)g\frac1{|C_m|}\mathbbm 1_{d\in C_m}}c\right)\] to be a Nash equilibrium is \[\kappa\ge\frac{\max\limits_j|C_j|-\min\limits_j|C_j|}{2\max\limits_j|C_j|}.\]
\end{theorem}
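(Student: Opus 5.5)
The plan is to check the two halves of the Nash condition separately. The students' half holds automatically. The teacher's half reduces to a finite family of inequalities, one for each pair of blocks of the partition $\{C_1,\dots,C_K\}$, and the binding inequality gives the stated bound.

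\textbf{Students.} Fix $s$ and its partition $\{C_m\}$. On hearing an on-path message $m$, Bayes' rule gives $\mu(d|m)=\frac1{|C_m|}\mathbbm 1_{d\in C_m}$. The objective of a type-$\alpha$ student is strictly concave (since $c>0$), so the first-order condition from the Model section has a unique solution, namely the displayed $x^*_{\alpha d}$. It is nonnegative, so the constraint $x\ge0$ does not bind. Hence the students are best-responding, and equilibrium is equivalent to the teacher having no profitable deviation from $s$. By the affine reduction at the start of this section, we may use $U_T(\theta,m)=-\sum_d\mu(d|m)^2+2\kappa\mu(\theta|m)$.

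\textbf{Teacher's payoffs.} For a block $C$ of size $n$ we have $\sum_d\mu(d|m)^2=n\cdot\frac1{n^2}=\frac1n$. Take a true day $\theta\in C_m$ with $|C_m|=n$. Following $s$ yields
\[-\frac1n+\frac{2\kappa}n=\frac{2\kappa-1}n.\]
Deviating to the message $m'$ of another block, with $|C_{m'}|=n'$ and $\theta\notin C_{m'}$, yields $-\frac1{n'}$. The no-deviation condition is therefore $\frac{2\kappa-1}{n}\ge-\frac1{n'}$, that is,
\[2\kappa\ge1-\frac{n}{n'},\]
required for every ordered pair of distinct blocks. The right-hand side increases in $n'$ and decreases in $n$. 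So the whole family holds if and only if it holds for $n=\min_j|C_j|$ and $n'=\max_j|C_j|$, which is exactly $\kappa\ge\frac{\max_j|C_j|-\min_j|C_j|}{2\max_j|C_j|}$. This gives necessity and sufficiency at once.

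\textbf{Degenerate cases.} If the min and max are attained by the same block, then all blocks have equal size and the bound reads $\kappa\ge0$. This is implied by each pairwise inequality, which becomes $2\kappa\ge0$ whenever $K\ge2$. If $K=1$ there are no deviations to other blocks, and the bound is trivially $0$.

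\textbf{Main obstacle.} The delicate step is messages $m$ with $C_m=\emptyset$, i.e.\ off-path messages, where Bayes' rule does not pin down $\mu(\cdot|m)$. I would handle this by adopting the natural convention that such a message is interpreted like some on-path message; for instance, students treat an unused label as a relabelling of an existing block, since in a pure equilibrium only the partition matters. A deviation to an off-path message is then no more profitable than a deviation to some block, and the analysis above covers it. Correspondingly, $\min_j$ and $\max_j$ in the statement are read over the nonempty blocks, and this convention should be stated explicitly in the proof.
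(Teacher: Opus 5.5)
Your proposal is correct and follows the paper's route. You check the students' first-order best response, compute the teacher's on-path payoff $\frac{2\kappa-1}{|C_j|}$ against the deviation payoff $-\frac{1}{|C_k|}$, and identify the binding pair as the smallest block deviating to the largest. Your explicit handling of off-path (empty-fiber) messages and of the equal-size and $K=1$ cases fills in points the paper's proof passes over silently; the off-path convention you adopt is the one the paper implicitly assumes (it only considers deviations to used messages), so stating it is a genuine improvement.
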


\begin{proof}
We have argued before that $x^*_{\alpha d}$ is the best response for a student of type $\alpha$, as for the teacher not to deviate, we need $U_T(\theta,\,m_j)\ge U_T(\theta,\,m_k)$ for any $j$, $k$, and for any $\theta\in C_j$, and we have \[U_T(\theta,\,m_j):=-\sum_{d\in D}\mu(d| m_j)^2+2\kappa\mu(\theta| m_j)=-\frac1{|C_j|}+2\kappa\frac1{|C_j|}\mathbbm 1_{\theta\in C_j}=\frac{2\kappa-1}{|C_j|}\] and $U_T(\theta,\,m_k)=-\frac1{|C_k|}$ for $\theta\not\in C_k$, so the inequality reduces to \[\frac{1-2\kappa}{|C_j|}\le\frac1{|C_k|},\] rearranging gives the aforementioned bound on $\kappa$.
\end{proof}

We have the following table for $\kappa_{\text{min}}$ to support equilibria for various induced partitions:

\begin{table}[h]
    \centering
    \begin{tabular}{|C{2.7cm}|C{0.7cm}|}
        \hline
        Cell-size Profile & $\kappa_{\text{min}}$\\
        \hline
        $5$ & $0$ \\
        \hline
        $4+1$ & $\frac38$ \\
        \hline 
        $3+2$ & $\frac16$ \\
        \hline
        $3+1+1$ & $\frac13$ \\
        \hline 
        $2+2+1$ & $\frac14$ \\
        \hline
        $2+1+1+1$ & $\frac14$ \\
        \hline
        $1+1+1+1+1$ & $0$ \\
        \hline
    \end{tabular}
\end{table}

The intuition is that more imbalanced partitions require a higher teacher preference for test-day preparation, due to the teacher's incentive compatibility. For example, the optimality of the profile $4+1$ requires a high $\kappa$ to support, because the teacher would be largely incentivized to deviate when he meets the singleton type (because otherwise the students knows the true test day for sure and thus cram), whereas for the profile $3+2$ the requirement for $\kappa$ is lower, as the probability $\frac13$ and $\frac12$ for the two types are not as far away as compared to $\frac14$ and $1$. The profile $1+1+1+1+1$ and $5$ are uniform, so there is no requirement whatsoever for the teacher's preference on cramming.

\section{Commitment}
In this section, we study what happens when the teacher can commit to a messaging strategy $\sigma$ before the realization of the true test day $\theta$. The students' best responses $x^*$ in the previous theorem carry over, it suffices for us to consider what the teacher does.

\begin{theorem}
When the teacher can commit, the teacher-optimal information policy is \[\begin{cases}\text{babbling},\,&\kappa<\frac12\\\text{anything},\,&\kappa=\frac12\\\text{full revelation},\,&\kappa>\frac12\end{cases},\] where babbling means $\sigma(m|\theta)=\frac15$ for any $m$, $\theta$, full revelation means that there is some function $s:D\to D$ such that $\sigma(s(\theta)|\theta)=1$ and $\sigma(d|\theta)=0$ for any $d\ne s(\theta)$.
\end{theorem}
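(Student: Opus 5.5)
The plan is to use the Bayesian persuasion framework of \cite{k2011}. The teacher's ex ante payoff under a committed policy $\sigma$ is the expectation of $U_T(\theta,m)$ over $\theta$ and $m$, and I will rewrite it as an expectation over posteriors. By Bayes' rule, the joint probability of $(\theta,m)$ is $\pi_\theta\sigma(m|\theta)=p_m\mu(\theta|m)$, where $p_m=\sum_\theta \pi_\theta\sigma(m|\theta)$. Summing the term $2\kappa\mu(\theta|m)$ against this joint probability gives
\[\mathbb E[U_T]=\sum_m p_m\Big(-\sum_d\mu(d|m)^2+2\kappa\sum_\theta\mu(\theta|m)^2\Big)=(2\kappa-1)\sum_m p_m\|\mu(\cdot|m)\|^2 .\]
So the indirect value of a posterior $\mu$ is $v(\mu)=(2\kappa-1)\|\mu\|_2^2$. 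Any splitting of the prior $\pi$ into posteriors with weights $p_m$ averaging back to $\pi$ is feasible. There are at most five messages, but any distribution of posteriors with at most five support points is attainable, and that covers the extreme cases we need.

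Next I use convexity. The map $\mu\mapsto\|\mu\|_2^2$ is strictly convex on the simplex. For any feasible splitting, $\sum_m p_m\mu(\cdot|m)=\pi$, so Jensen's inequality gives $\pi$'s norm as a lower bound:
\[\frac15=\|\pi\|^2\le\sum_m p_m\|\mu(\cdot|m)\|^2\le 1.\]
The lower bound holds with equality iff every posterior equals $\pi$, which is babbling. The upper bound holds with equality iff every posterior is a vertex $\delta_\theta$, meaning each message reveals $\theta$. That is exactly full revelation by an injective $s$, since five states and five messages force $s$ to be a bijection.

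The sign of $2\kappa-1$ then settles the three cases. If $\kappa<\tfrac12$, the teacher minimizes $\sum p_m\|\mu\|^2$, so babbling is uniquely optimal up to relabeling: any $\sigma$ with $\sigma(\cdot|\theta)$ independent of $\theta$ yields posteriors equal to $\pi$, and the uniform $\tfrac15$ choice is one representative. If $\kappa>\tfrac12$, the teacher maximizes it, so full revelation is uniquely optimal. If $\kappa=\tfrac12$, the payoff is identically zero, so every policy is optimal.

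The main obstacle is the identity in the first step: showing that the $\kappa$ term collapses to $\sum_m p_m\|\mu(\cdot|m)\|^2$ through the Bayes-rule swap. A second point needs care. The statement's "full revelation" only requires \emph{some} function $s$, but optimality needs $s$ injective. I would add a remark that a non-injective $s$ pools states and gives a strictly lower value when $\kappa>\tfrac12$, so "full revelation" should be read as a bijective $s$.
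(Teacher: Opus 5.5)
Your proof is correct and follows the paper's argument. Both reduce the teacher's payoff to $(2\kappa-1)\sum_\theta\mu(\theta|m)^2$ and let the sign of $2\kappa-1$ decide between minimizing and maximizing it. Your additions supply rigor that the paper's per-message sketch omits: the $p_m$-weighting, the equality cases, and the injectivity caveat (a constant $s$ meets the statement's literal definition of ``full revelation'' but is actually babbling). Two minor points: Jensen is more than you need, since the uniform prior already minimizes $\|\mu\|^2$ pointwise on the simplex; and ``uniquely optimal up to relabeling'' should read ``optimal exactly among uninformative $\sigma$,'' as your own next clause says.
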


\begin{proof}
Upon sending a message $m$, the expected utility of the teacher is \begin{align*}\sum_{\theta\in D}\mu(\theta|m)U_T(\theta,\,m)&=\sum_{\theta\in D}\mu(\theta|m)\left(-\sum_{d\in D}\mu(d|m)^2+2\kappa\mu(\theta|m)\right)\\&=2\kappa\sum_{\theta\in D}\mu(\theta|m)^2-\sum_{d\in D}\mu(d|m)^2\\&=(2\kappa-1)\sum_{\theta\in D}\mu(\theta|m)^2.\end{align*} Hence, if $\kappa<\frac12$, the teacher minimizes $\sum\limits_{\theta\in D}\mu(\theta|m)^2$, so he wants to choose a $\sigma$ so as to make the posterior distribution uniform. If $\kappa>\frac12$, the teacher maximizes $\sum\limits_{\theta\in D}\mu(\theta|m)^2$, so he wants to choose a $\sigma$ so as to make the posterior distribution concentrated. If $\kappa=\frac12$, the teacher does not care.
\end{proof}

\section{Possible Extensions}
Some possible future research directions include studying what happens during the week, meaning how the students would update their beliefs intra-week. Another important question is what happens in the second week, the third week, when the teacher and the students can learn about each other's behavior. We can also study the impact of peer effects, in other words, what happens when the students care about how others look at themselves, beyond their own grades and long term development?

\end{document}